\newtheorem{theorem}{Theorem}
\newtheorem{remark}{Remark}
\newtheorem{assumption}{Assumption}
\newtheorem{lemma}{Lemma}
\newcommand{\bs}{\boldsymbol}
\newcommand{\mbf}{\mathbf}
\newcommand{\be}{\begin{equation}}
\newcommand{\ee}{\end{equation}}
\date{}
\title{Cooperative Output Regulation of Linear Multi-agent Systems with Communication Constraints}
\author{Abdelkader Abdessameud and Abdelhamid Tayebi
\thanks{This work was supported by the Natural Sciences and Engineering Research Council of Canada (NSERC).}
\thanks{The authors are with the Department of Electrical and Computer Engineering, University of Western Ontario, London, Ontario, Canada. The third author is also with the Department of Electrical Engineering, Lakehead University, Thunder Bay, Ontario, Canada.
{\tt\small aabdess@uwo.ca, tayebi@ieee.org} }%
}
\begin{document}
\maketitle

\begin{abstract}
In this paper, we consider the cooperative output regulation problem for heterogeneous linear multi-agent systems in the presence of communication constraints. Under standard assumptions on the agents dynamics, we propose a distributed control algorithm relying on intermittent and asynchronous discrete-time information exchange that can be subject to unknown time-varying delays and information losses. We show that cooperative output regulation can be reached for arbitrary
characteristics of the discrete-time communication process and under mild assumptions on the interconnection topology between agents. A numerical example is given to illustrate the effectiveness of our theoretical results.
\end{abstract}

\section{Introduction}
            Distributed coordination in multi-agent systems has recently gained extensive attention due to its potential applications in engineering, biological and social systems \cite{Ren:Cao:book}. The main goal in distributed coordinated control is to realize a group objective using local interaction between agents. From this perspective, various coordinated control algorithms for identical linear multi-agent systems described by single/mulitple integrators, linear oscillators, and more general high-order dynamics have been proposed in the literature under some specific assumptions on the interconnection between agents. Examples of these results can be found in
            \cite{ren2008distributed, bullo2009distributed, su2009synchronization, abdessameud2010consensus, li2010consensus, yu2010distributed,  cao2012distributed, liu2012necessary, abdessameud2013consensus, li2015designing}, where several methods have been proposed to solve different, yet closely-related, coordination problems including consensus, flocking, formation maintenance, cooperative tracking, and synchronization. For heterogeneous multi-agent systems, \cite{wieland2011internal} have shown that distributed algorithms can also be derived using similar methods combined with results form the classical output regulation theory \cite{francis1976internal}. In fact, the cooperative output regulation problem has emerged as an important problem that encapsulates many coordinated control problems of heterogeneous agents (see, for instance, \cite{xiang2009synchronized, su2012cooperative1, su2012cooperative2, liu2015robust, meng2015coordinated}).

            In this paper, we consider the cooperative output regulation problem of heterogeneous linear multi-agent systems
            governed by the general dynamics
            \begin{eqnarray}\label{model_heterogenous}\begin{array}{lcl}
            \dot{x}_i &=& A_i x_i + B_i u_i + E_i \upsilon  \\
            y_i &=& C_i x_i + D_i u_i + F_i \upsilon
             \end{array},&\quad& i\in\mathcal{N},
            \end{eqnarray}
            where $x_i \in\mathbb{R}^{N_{x_i}}$, $u_i \in\mathbb{R}^{N_{u_i}}$, $y_i \in\mathbb{R}^{N_{y_i}}$ are, respectively, the state vector, the control input, and the measured output of the $i$-th agent, $A_i$, $B_i$, $C_i$, $D_i$, $E_i$, $F_i$ are matrices of appropriate dimensions, and $\mathcal{N}:=\{1,\ldots,n\}$ is the index set of all agents. The signal $\upsilon\in\mathbb{R}^{q}$ models both the global reference signal to be tracked and the disturbance to be rejected by each agent and is generated by the following exogenous dynamic system
                \begin{eqnarray}\label{exosystem}
                \dot{\upsilon} &=& S \upsilon,
                \end{eqnarray}
            with some initial states and $S\in\mathbb{R}^{q\times q}$ being a known matrix. The objective consists in designing appropriate inputs $u_i$ such that stabilization of some regulated error signal, to be defined later, is guaranteed.

            Clearly, in the case where all agents can sense/estimate the exogenous signal $\upsilon$, the above described problem reduces to the output regulation problem of a single plant studied by \cite{huang2004nonlinear}. In this work, we are interested in the case where the exogenous signal $\upsilon$ can be captured only by a group of informed agents whereas the other agents attempt to achieve the control objective by coordinating with other team members. Basically, all agents need to reach an agreement on the estimate of the time-varying exogenous signal using local information exchange performed according to some interconnection topology. To that end, all agents are interconnected in the sense that some information can be transmitted between agents according to some graph topology.

            The above problem, with its variants regarding the system model, has been addressed in \cite{xiang2009synchronized, su2012cooperative1, su2012cooperative2, liu2015robust, meng2015coordinated}, to cite a few, under different assumptions on the interconnection graph, however, under similar idealized assumptions on the interaction between agents which is generally performed using communication networks. In fact, in all the above mentioned papers, the information exchange is assumed ideal in the sense that the information is continuously transmitted between agents and received in real time. In practice, however, communication over networks is subject to time-varying delays, packets dropouts, and can be discrete-time and intermittent due to various environmental and/or technological factors. Motivated by this, our main interest in this paper is to solve the cooperative output regulation problem for system \eqref{model_heterogenous} assuming constrained discrete-time communication between agents.

            The second-order consensus problem for double integrators has been studied, for instance, in \cite{yu2010some, sun2009consensus2, cepeda2011exact, liu2014containment}, in the presence of uniform constant communication delays.
            For identical higher-order linear multi-agents,  \cite{zhou2014consensus} presented a consensus algorithm in the presence of arbitrary large constant communication delays. A similar result was also obtained in \cite{zhou2014consensus} in the case of uniform time-varying delays, however, under some conditions on the delays upper bounds and some restrictions on the dynamics of the agents. A common assumption in the above mentioned delay-robust algorithms is that the information exchange is assumed to be continuous in time and the communication delays are perfectly known.
            In \cite{xiang2014synchronised}, a consensus algorithm for high-order heterogeneous agents has been proposed
             assuming sampled-data information exchange subject to known constant communication delays.

             On the other hand, the authors in \cite{gao2010consensus, Wen:Duan:2012, zhou2012synchronization, wen:ren:2013, huang2014intermittentdelayed} presented consensus algorithms, for linear homogeneous multi-agent systems, in the case of intermittent information exchange between agents. However, communication delays have been considered only in \cite{huang2014intermittentdelayed} dealing with the second-order consensus problem under the assumptions of strong connectivity, periodic intermittent communication, and perfectly known time-varying communication delays. More recently, a small-gain framework has been adopted in \cite{Abdessameud:Polushin:Tayebi:2013:ieeetac} to design distributed algorithms for nonlinear second-order systems in the presence of irregular communication delays. The latter approach has been further developed in \cite{abdessameud2015synchronization, abdessameud2015leader, abdessameud2016distributed} to solve similar problems for second-order systems assuming delayed and (not necessarily periodic) intermittent discrete-time information exchange.

            The main contribution of this paper is a solution to the cooperative output regulation problem for heterogeneous linear multi-agent systems assuming discrete-time, intermittent and asynchronous information exchange, subject to non-uniform and unknown irregular communication delays that can be unbounded. As compared to the relevant literature mentioned above, the present work considers the coordinated control problem of high-order heterogeneous multi-agent systems by taking into account all the above communication constraints simultaneously. Our control objective can be
            reached in the case of a general directed graph, describing the interconnection between agents, for
            arbitrary properties of the communication process that can induce large communication blackouts.
            To the best of our knowledge, there is no coordinated control algorithm in the literature that takes into account (simultaneously) all the above mentioned communication constraints for high-order linear heterogeneous (or identical) multi-agent systems.

\section{Problem Formulation}

\subsection{Model Description}\label{sec:model}

            Consider the $n$ heterogenous agents in \eqref{model_heterogenous} and suppose that the external signal $\upsilon$, generated by \eqref{exosystem}, can be estimated only by some informed agents referred to as leaders. The uninformed agents, acting as followers, do not have access to the exogenous signal. Without loss of generality, let $\mathcal{F}:= \{ 1, \ldots, m\}\subset \mathcal{N}$ and $\mathcal{L}:= \mathcal{N}\setminus \mathcal{F}$, with $0< m < n$, denote the sets of indices corresponding the followers and leaders, respectively.

            Let $\mathcal{G}=(\mathcal{N},\mathcal{E},\mathcal{A})$ denote the directed graph that models the interconnection between agents, where $\mathcal{N}$ is the set of nodes representing the agents, $\mathcal{E}\in\mathcal{N}\times\mathcal{N}$ is the set of pairs of edges, and $\mathcal{A}=[a_{ij}]\in\mathbb{R}^{n\times n}$ is the weighted adjacency matrix. An edge $(j, i)\in\mathcal{E}$, represented by a directed link from node $j$ to node $i$, indicates that agent $i$ can receive information from agent $j$ but not {\it vice versa}. A finite sequence of distinct edges of $\mathcal{G}$ in the form $(j,l_1),(l_1,l_2), \ldots,  (l_p, i)$ is called a directed path from $j$ to $i$. The elements of $\mathcal{A}$ are defined such that $a_{ii}:= 0$, $a_{ij}>0$ if $(j,i)\in\mathcal{E}$, and $a_{ij}=0$ if $(j,i)\notin \mathcal{E}$. The Laplacian matrix $\mathbf{L}:=[l_{ij}]\in\mathbb{R}^{n\times n}$ associated to $\mathcal{G}$ is defined such that: $l_{ii}=\sum_{j=1}^n a_{ij}$, and $l_{ij}=-a_{ij}$ for $i\neq j$.

            Also, the information exchange is discrete in time and subject to irregular communication delays. More precisely, for each $(j,i)\in\mathcal{E}$, agent $j$ can send data to agent $i$ only at instants $t_{k_{ij}} = k_{ij} t_s$, with $k_{ij} \in \mathcal{S}_{ij}\subseteq \{0,1, 2, \ldots \}$ and  $t_s$ being a common sampling period. This information can be received by agent $i$ at instant $t_{k_{ij}}+\tau_{k_{ij}}$, where $\left(\tau_{k_{ij}}\right)_{k_{ij} \in \mathcal{S}_{ij}}$ is a sequence of communication delays that take values in ${\mathbb R}_+\cup\{ +\infty\}$, where  $\tau_{k_{ij}}=+\infty$ means that the corresponding data has been lost during transmission.

\subsection{Problem Statement and Assumptions}\label{sec:pb:statement}

            Our objective consists in designing a control algorithm for \eqref{model_heterogenous}-\eqref{exosystem} such that the regulated error signal $e_i\in\mathbb{R}^{p_{e_i}}$ written in the general form \vspace{-0.05 in}
            \begin{eqnarray}\label{regulated error}
                e_i  &=& C_{e_i} x_i + D_{e_i} u_i + F_{e_i} \upsilon,\quad i\in\mathcal{N},\vspace{-0.05 in}
            \end{eqnarray}
            satisfies $e_{i}(t) \to 0$, for $i\in\mathcal{N}$, for arbitrary initial conditions. Note that the above formulation, with $\mathcal{L}\neq \mathcal{N}$, is general in the sense that it captures many problems relevant to multi-agent systems such as leader-follower consensus/synchronization and cooperative tracking/disturbance rejection.

%
\begin{assumption}\label{assum:control:observer}
            \begin{itemize}
             \item [i)] $(A_i, B_i)$ is stabilizable for $i\in\mathcal{N}$.
             \item [ii)] $(C_i, A_i)$, for $i\in\mathcal{F}$, and $(\bar{C}_i, \bar{A}_i )$, for $i\in\mathcal{L}$, are detectable, with
                 \begin{eqnarray}\label{bar_A}
                 \bar{A}_i:= \left[ \begin{array}{ll} A_i & E_i\\ 0_{q\times N_{x_i}}& S\end{array}\right], \quad \bar{C}_i := \left[C_i \quad F_i\right], \quad i\in\mathcal{L}.\end{eqnarray}
             \item [iii)] For all $i\in\mathcal{N}$, there exist matrices  $\Pi_i,~ \Gamma_i$ such that
                    \begin{eqnarray}\label{regulator:equations}
                    \begin{array}{lcl}\Pi_i S &=& A_i \Pi_i + B_i \Gamma_i + E_i\\
                    0&=& C_{ei} \Pi_i + D_{ei} \Gamma_i + F_{ei}\end{array}.
                    \end{eqnarray}
                    \end{itemize}
            \end{assumption}

            Assumption~\ref{assum:control:observer}, item $iii)$ is standard and necessary for the solvability of the output regulation problem \cite{huang2004nonlinear}. Item $ii)$ in the above assumption distinguishes the two sets $\mathcal{L}$ and $\mathcal{F}$, and implies that the leaders can estimate their states as well as the external signal $\upsilon$ using their measured outputs. This is not the case for the follower agents where only their states are detectable from the measurements.

            \begin{assumption}\label{assum:S}
            The eigenvalues of $S$ in \eqref{exosystem} lie on the imaginary axis of the complex plane.
            \end{assumption}
%
%
            \begin{assumption}\label{AssumptionCommunicationBlackouts}
                    For each pair $(j,i)\in\mathcal{E}$, there exist a strictly increasing infinite subsequence $\bar{\mathcal{S}}_{ij} = \{k_{ij}^{(1)}, k_{ij}^{(2)}, \ldots \}\subseteq \mathcal{S}_{ij}$  and $h^*>0$  such that: $t_{k_{ij}^{(l+1)}} +\tau_{k_{ij}^{(l+1)}} - t_{k_{ij}^{(l)}}\le h^*$, for $l=1, 2,\ldots .$
                                \end{assumption}
%

            Assumption~\ref{AssumptionCommunicationBlackouts} states that, for each pair $(j,i)\in\mathcal{E}$, there exists a subsequence of transmission time instants $\bar{\mathcal{S}}_{ij}$ and the corresponding communication delays such that the information sent by agent $j$ at instants $t_{k_{ij}}$ for  $k_{ij} \in \bar{\mathcal{S}}_{ij}$ are successfully received by agent $i$. Further, for each pair $(j,i)\in\mathcal{E}$, the maximum length of communication blackout intervals between agents $j$ and $i$ does not exceed an arbitrary (not necessarily known) bound $h^*$. Obviously, an infinite $h^*$ implies that communication between agents is completely lost and the cooperative output regulation problem described above may not be solved, in general. Also, note that the subsequences $\mathcal{S}_{ij}$ and $\bar{\mathcal{S}}_{ij}$ are defined for each edge in $\mathcal{E}$, which shows that the intermittent and delayed discrete-time information exchange described above is also asynchronous.
            \begin{assumption}\label{assumption_graph}
                For each node $i\in\mathcal{L}$, the edge $(j,i)\notin \mathcal{E}$ for all $j\in\mathcal{N}$. Also, for each node $i\in\mathcal{F}$, there exists at least one node $j\in\mathcal{L}$ such that a directed path from $j$ to $i$ exists in $\mathcal{G}$.  \vspace{-0.1 in}
            \end{assumption}
            Assumption~\ref{assumption_graph} implies that a leader node does not receive information from any other node in $\mathcal{G}$, which is reasonable since each leader can estimate the external signals using only its measurements. Also, for each follower agent, there exists at least one leader having a directed path to that follower. Consequently, the Laplacian matrix  associated to $\mathcal{G}$ takes the form
         \begin{eqnarray}
         \label{laplacian}
          \mathbf{L}&=&\left[\begin{array}{ll} \mathbf{L}_1&\mathbf{L}_2\\ 0_{(n-m)\times m}&0_{(n-m)\times (n-m)}\end{array}\right].
         \end{eqnarray}
        and satisfies the properties in the following lemma.
        \begin{lemma}\label{lemma_Mei}\cite[Lemma 2.3]{Mei:Ren:2012}
         Consider $\mathbf{L}$ defined in \eqref{laplacian}. Under Assumption~\ref{assumption_graph}, the matrix $\mathbf{L}_1$  is a nonsingular M-matrix\footnote{A matrix $A \in\mathbb{R}^{n\times n}$ is said to be a nonsingular M-matrix if $A \in \mathcal{Z}_n$ and all eigenvalues of $A$ have positive real parts, where $\mathcal{Z}_n \subset \mathbb{R}^{n\times n}$ denotes the set of square matrices with non-positive off-diagonal entries \cite{qu2009cooperative}.}, each entry of $ -\mathbf{L}^{-1}_1 \mathbf{L}_2$ is nonnegative, and all row sums of $-\mathbf{L}_1^{-1} \mathbf{L}_2$ are equal to one.
        \end{lemma}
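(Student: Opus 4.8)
The plan is to establish the three claimed properties in turn, exploiting the block form \eqref{laplacian} together with the standard characterization of nonsingular M-matrices; since the statement is quoted from \cite[Lemma 2.3]{Mei:Ren:2012}, I would only sketch the reasoning. First I would note that $\mathbf{L}_1\in\mathcal{Z}_m$ is immediate, because its off-diagonal entries coincide with the corresponding entries of $\mathbf{L}$, namely $l_{ij}=-a_{ij}\le 0$. To locate the spectrum of $\mathbf{L}_1$, I would invoke Ger\v{s}gorin's disc theorem together with the facts that $\mathbf{L}$ has zero row sums and every entry of $\mathbf{L}_2$ is nonpositive: this yields $l_{ii}=\sum_{j\in\mathcal{F},\,j\neq i}a_{ij}+\sum_{j\in\mathcal{L}}a_{ij}\ge\sum_{j\in\mathcal{F},\,j\neq i}a_{ij}$, so $\mathbf{L}_1$ is weakly diagonally dominant and all its eigenvalues have nonnegative real parts, making it at least a (possibly singular) M-matrix.

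The crux is nonsingularity, and here I would use Assumption~\ref{assumption_graph}. Since every follower is reachable in $\mathcal{G}$ from some leader, there is no nonempty subset $V\subseteq\mathcal{F}$ whose nodes receive edges only from within $V$, as such a $V$ could not be reached from any leader. Equivalently, contracting the leader set $\mathcal{L}$ into a single virtual root node $0$ produces a graph on $\{0\}\cup\mathcal{F}$ that contains a directed spanning tree rooted at $0$, and the principal submatrix of its Laplacian obtained by deleting row and column $0$ is precisely $\mathbf{L}_1$. The classical fact (see also \cite{qu2009cooperative}) that such a submatrix is nonsingular then upgrades the Ger\v{s}gorin bound to the conclusion that $\mathbf{L}_1$ is a nonsingular M-matrix.

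Finally I would appeal to the standard property that the inverse of a nonsingular M-matrix is entrywise nonnegative, so $\mathbf{L}_1^{-1}\ge 0$; since the entries of $\mathbf{L}_2$ are $-a_{ij}\le 0$, i.e. $-\mathbf{L}_2\ge 0$ entrywise, the product $-\mathbf{L}_1^{-1}\mathbf{L}_2=\mathbf{L}_1^{-1}(-\mathbf{L}_2)$ is entrywise nonnegative. For the row sums, I would use that $\mathbf{L}\mathbf{1}_n=0$ forces $\mathbf{L}_1\mathbf{1}_m=-\mathbf{L}_2\mathbf{1}_{n-m}$, whence $(-\mathbf{L}_1^{-1}\mathbf{L}_2)\mathbf{1}_{n-m}=\mathbf{1}_m$, meaning each row of $-\mathbf{L}_1^{-1}\mathbf{L}_2$ sums to one. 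I expect the only genuinely delicate step to be the nonsingularity argument via the reachability/spanning-tree condition; the remaining claims are essentially bookkeeping once the M-matrix structure is in hand.
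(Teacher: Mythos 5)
Your proof is correct. The paper offers no proof of this lemma --- it is imported verbatim from \cite[Lemma 2.3]{Mei:Ren:2012} --- so there is nothing internal to compare against; your sketch (Ger\v{s}gorin plus weak diagonal dominance for the location of the spectrum, contraction of the leader set into a virtual root together with the grounded-Laplacian nonsingularity criterion under the reachability part of Assumption~\ref{assumption_graph}, nonnegativity of the inverse of a nonsingular M-matrix, and the zero-row-sum identity $\mathbf{L}_1\mathbf{1}_m=-\mathbf{L}_2\mathbf{1}_{n-m}$ for the row sums) is the standard argument and each step is sound.
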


\section{Distributed Output Regulation Algorithm}

           Consider the following control law for each agent
           \begin{eqnarray}
            \label{input:control} u_i &=& K_i (\hat{x}_i - \Pi_i \hat{\upsilon}_i) + \Gamma_i \hat{\upsilon}_i, \\
            \label{observer}  \dot{\hat{x}}_i &=& A_i \hat{x}_i + E_i \hat{\upsilon}_i + B_i u_i + L_{1i}(\hat{y}_i - y_i),\\
            \label{obeserver:output}\hat{y}_i &=& C_i \hat{x}_i + F_i \hat{\upsilon}_i + D_i u_i,
            \end{eqnarray}
            where $\hat{x}_i$ is the observed state, $\hat{y}_i$ is the observer output, $K_i$, $L_{1i}$ are gain matrices with appropriate dimensions, and the signal $\hat{\upsilon}_i$ is an estimate of the exogenous signal obtained by each agent according to the following algorithm
           \begin{eqnarray}\label{observer:reference}
            \dot{\hat{\upsilon}}_i &=& S \hat{\upsilon}_i +  \eta_i, \quad \mbox{for}~i\in\mathcal{N},
            \end{eqnarray}
            where $\eta_i\in\mathbb{R}^q$, for $i\in\mathcal{F}$, is an input to be designed, and
             \begin{eqnarray}\label{eta_leaders}
            \eta_i &=&  L_{2i}(\hat{y}_i - y_i), \quad \mbox{for}~i\in\mathcal{L},
            \end{eqnarray}
            with $L_{2i}$, $i\in\mathcal{L}$, being a gain matrix of appropriate dimension. Let $\bar{L}_i = \begin{bmatrix} L_{1i}^\top & L_{2i}^\top \end{bmatrix}^\top$, for $ i\in\mathcal{L}$.

             Note that \eqref{input:control}-\eqref{eta_leaders}, for $i\in\mathcal{L}$, is a classical observer-based control algorithm that guarantees, along with Assumption~\ref{assum:control:observer}, the exponential convergence to zero of the regulated error for $i\in\mathcal{L}$ \cite{huang2004nonlinear}. Also, as it will become clear later, the control scheme \eqref{input:control}-\eqref{obeserver:output}, for $i\in\mathcal{F}$, ensures the solvability of the output regulation problem for all follower agents provided that each follower can estimate the state of the exogenous system with an appropriate design of $\eta_i$ in \eqref{observer:reference}, $i\in\mathcal{F}$,  using intermittent, delayed and discrete-time communication.

             For this purpose, we suppose that the data that can be transmitted from agent $j$ to agent $i$ at instant $t_{k_{ij}}=k_{ij}t_s$, for each $(j,i)\in\mathcal{E}$ and each $k_{ij}\in\mathcal{S}_{ij}$, consists of the sequence number $k_{ij}$ of a transmission instant $t_{k_{ij}}$, and the vector $\hat{\upsilon}_j(k_{ij} t_s)$ obtained from \eqref{observer:reference} for $j\in\mathcal{N}$. Also, for each pair $(j,i)\in\mathcal{E}$ and each time instant $t\geq 0$, let $k_{ij}^\mathrm{x}(t)$ denote the largest integer number such that $\hat{\upsilon}_j(t_{k_{ij}^\mathrm{x}(t)})$, with $t_{k_{ij}^\mathrm{x}(t)} = k_{ij}^\mathrm{x}(t) t_s$, is the most recent information of agent $j$ that is already delivered to agent $i$ at $t$. This integer $k_{ij}^\mathrm{x}(t)$ can be determined by a simple comparison of the received sequence numbers.

            Consider the following input $\eta_i$ in \eqref{observer:reference} for $i\in\mathcal{F}$
            \begin{equation}\label{eta_followers}
            \eta_i = - \sum_{j=1}^n a_{ij} \big(\hat{\upsilon}_i - \underbrace{e^{S(t-t_{k_{ij}^\mathrm{x}(t)})} \hat{\upsilon}_j(t_{k_{ij}^\mathrm{x}(t)})}_{\hat{\upsilon}_{ij}^*}\big),
            \end{equation}
                where $\hat{\upsilon}_{ij}^*$ can be regarded as a prediction of the signal $\hat{\upsilon}_{j}(t)$ obtained using the most recent information received from agent $j$.

              \begin{theorem}\label{theorem1}
                Consider the multi-agent system \eqref{model_heterogenous} with \eqref{exosystem} and suppose that Assumptions \ref{assum:control:observer}-\ref{assumption_graph} hold. For each agent, consider the control algorithm \eqref{input:control}-\eqref{observer:reference} where the input $\eta_i$ is given in \eqref{eta_leaders} for $i\in\mathcal{L}$ and in \eqref{eta_followers} for $i\in\mathcal{F}$. Pick the gains $K_i$, $L_{1i}$ and $L_{2i}$ such that $(A_i + B_i K_i)$, for $i\in\mathcal{N}$, $(A_i + L_{1i}C_i)$, for $i\in\mathcal{F}$, and
                $\left(\bar{A}_i + \bar{L}_i\bar{C}_i\right)$, for $i\in\mathcal{L}$, are stable matrices. Then, the cooperative output regulation problem is solved for arbitrary initial conditions.
                          \end{theorem}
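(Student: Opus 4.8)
The plan is to reduce cooperative output regulation to a distributed estimation problem — it suffices to prove $\hat\upsilon_i(t)\to\upsilon(t)$ for every agent — and then to establish this estimation property for leaders by a classical observer argument and for followers by turning \eqref{observer:reference}--\eqref{eta_followers} into a consensus recursion to which the tools for intermittent, delayed communication apply. Introduce the errors $\tilde x_i:=\hat x_i-x_i$, $\tilde\upsilon_i:=\hat\upsilon_i-\upsilon$ and $\bar x_i:=x_i-\Pi_i\upsilon$. Substituting \eqref{input:control} into \eqref{model_heterogenous}, \eqref{regulated error} and using the regulator equations \eqref{regulator:equations}, a routine computation gives, for all $i\in\mathcal N$,
\begin{eqnarray*}
\dot{\bar x}_i &=& (A_i+B_iK_i)\bar x_i + B_i\big[K_i\tilde x_i+(\Gamma_i-K_i\Pi_i)\tilde\upsilon_i\big],\\
e_i &=& C_{e_i}\bar x_i + D_{e_i}\big[K_i\bar x_i+K_i\tilde x_i+(\Gamma_i-K_i\Pi_i)\tilde\upsilon_i\big],
\end{eqnarray*}
and, for $i\in\mathcal F$, $\dot{\tilde x}_i=(A_i+L_{1i}C_i)\tilde x_i+(E_i+L_{1i}F_i)\tilde\upsilon_i$. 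Since $A_i+B_iK_i$ and (for followers) $A_i+L_{1i}C_i$ are Hurwitz by hypothesis, these cascades are input-to-state stable with respect to $\tilde\upsilon_i$, so $e_i(t)\to0$ once $\tilde\upsilon_i(t)\to0$ (for followers) and once $(\tilde x_i,\tilde\upsilon_i)(t)\to0$ (for leaders).

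For $i\in\mathcal L$, stacking $\bar\xi_i:=[\tilde x_i^\top\ \tilde\upsilon_i^\top]^\top$ and using \eqref{eta_leaders} together with $\hat y_i-y_i=\bar C_i\bar\xi_i$ yields $\dot{\bar\xi}_i=(\bar A_i+\bar L_i\bar C_i)\bar\xi_i$, which is exponentially stable by hypothesis; hence $\tilde x_i,\tilde\upsilon_i\to0$ exponentially, the regulated error vanishes for the leaders, and $\zeta_i:=e^{-St}\tilde\upsilon_i\to0$ exponentially as well, since $\|e^{-St}\|$ grows at most polynomially under Assumption~\ref{assum:S}.

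For the followers, apply the change of variables $\zeta_i:=e^{-St}\tilde\upsilon_i$. Because $\upsilon(t_{k_{ij}^\mathrm{x}(t)})$ propagated forward by $e^{S(t-t_{k_{ij}^\mathrm{x}(t)})}$ equals $\upsilon(t)$, the prediction term $\hat\upsilon^*_{ij}$ in \eqref{eta_followers} exactly cancels the drift $S\upsilon$, and one obtains, for $i\in\mathcal F$,
\begin{equation*}
\dot\zeta_i=-\sum_{j=1}^n a_{ij}\big(\zeta_i(t)-\zeta_j(t_{k_{ij}^\mathrm{x}(t)})\big),
\end{equation*}
i.e.\ a sampled-data consensus law over the digraph $\mathcal G$ in which the leader channels $\zeta_j$, $j\in\mathcal L$, act as vanishing external signals and the coupling is decoupled across the $q$ coordinates of $\zeta$. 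Moreover, Assumption~\ref{AssumptionCommunicationBlackouts} implies that the \emph{effective} delay $t-t_{k_{ij}^\mathrm{x}(t)}$ is uniformly bounded by $h^*$ once the first reliable packet on edge $(j,i)$ has been delivered: at any such $t$, the freshest delivered sample is at least as recent as the last subsequence sample $t_{k_{ij}^{(l)}}$, while $t$ precedes the next subsequence reception $t_{k_{ij}^{(l+1)}}+\tau_{k_{ij}^{(l+1)}}\le t_{k_{ij}^{(l)}}+h^*$.

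It remains to prove $\zeta_i(t)\to0$ for $i\in\mathcal F$. Writing the solution of the above equation through the scalar flow $e^{-l_{ii}(t-s)}$ and using $\sum_j a_{ij}=l_{ii}$ exhibits $\zeta_i(t)$ as a genuine convex combination of $\zeta_i(t_0)$ and of delayed neighbour values $\zeta_j(t_{k_{ij}^\mathrm{x}(s)})$, $s\in[t_0,t]$; boundedness of all $\zeta_i$ on $[0,\infty)$ follows. Then, using the uniform delay bound $h^*$, the sampling period $t_s$, the fact from Assumption~\ref{assumption_graph} that every follower is reached by a directed path from $\mathcal L$, and the M-matrix structure of Lemma~\ref{lemma_Mei} (which forces the followers' limit to coincide with the common leader value), one shows that over windows of a fixed length $T^*=T^*(h^*,t_s,\mathcal G)$ the quantity $\max_{i\in\mathcal F}\sup_{s\ge t}|\zeta_i^{(\ell)}(s)|$ contracts by a factor $\rho\in(0,1)$ up to a term driven by $\sup_{j\in\mathcal L}|\zeta_j^{(\ell)}(s)|\to0$, for each coordinate $\ell$; iterating gives $\zeta_i(t)\to0$ geometrically, whence $\tilde\upsilon_i(t)=e^{St}\zeta_i(t)\to0$, the geometric decay dominating the at most polynomial growth of $e^{St}$ under Assumption~\ref{assum:S}. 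Together with the previous two paragraphs this yields $e_i(t)\to0$ for all $i\in\mathcal N$. The main obstacle is precisely this last contraction estimate: propagating the comparison functional $\max_i\sup_{s\ge t}|\zeta_i^{(\ell)}(s)|$ through a \emph{directed} interconnection with simultaneously sampled, asynchronous, lossy and (between reliable receptions) arbitrarily delayed data, which is where the small-gain/contraction machinery developed by the authors for second-order systems must be adapted to the present high-order heterogeneous setting.
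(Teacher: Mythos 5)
Your reduction of the regulation problem to the distributed estimation problem is correct and matches the paper's: the cascade $(\tilde x_i,\tilde\upsilon_i)\to\bar x_i\to e_i$ with the Hurwitz/ISS argument, and the leader analysis via $\dot{\bar\xi}_i=(\bar A_i+\bar L_i\bar C_i)\bar\xi_i$, are exactly what is done in the proof of Theorem~\ref{theorem1} (the paper uses $\varepsilon_i=x_i-\Pi_i\hat\upsilon_i$ rather than your $x_i-\Pi_i\upsilon$, an immaterial difference). Your treatment of the follower estimators is genuinely different in one respect: you apply the global change of variables $\zeta_i=e^{-St}\tilde\upsilon_i$ to \eqref{tilde_upsilon_proof} and obtain the clean delay/sampled consensus recursion $\dot\zeta_i=-\sum_j a_{ij}(\zeta_i(t)-\zeta_j(t_{k_{ij}^\mathrm{x}(t)}))$ in one step, whereas the paper first passes to the real Schur form of $S$ and performs the analogous exponential change of variables block by block, treating the strictly upper-triangular couplings as vanishing perturbations in a recursive cascade ($\ell=p,\dots,1$). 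Your route is more direct, but it is not free: undoing your change of variables multiplies by $e^{St}$, which under Assumption~\ref{assum:S} may grow polynomially if $S$ has nontrivial Jordan structure, so you genuinely need \emph{geometric} decay of $\zeta_i$, not mere convergence. The paper's Schur-form cascade is designed precisely to avoid this: each diagonal block exponential $e^{T_{\ell\ell}t}$ is bounded above and below, so plain asymptotic convergence of the transformed variables suffices.

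The genuine gap is that you do not prove the one statement that carries the whole theorem: convergence (with a rate, on your route) of the delayed, sampled, asynchronous consensus dynamics over the directed graph. You sketch the right ingredients --- the convex-combination representation via the flow $e^{-\kappa_i(t-s)}$, the uniform bound $t-t_{k_{ij}^\mathrm{x}(t)}\le h^*$ from Assumption~\ref{AssumptionCommunicationBlackouts}, reachability of every follower from $\mathcal{L}$ --- but you explicitly defer the window-contraction estimate as ``the main obstacle ... where the small-gain/contraction machinery ... must be adapted.'' That estimate is not a routine adaptation to be waved at; it is the mathematical core of the result, and the paper devotes Lemma~\ref{lemma:proof} and its appendix to it: after the exponential change of variables one writes an input-to-output-stability estimate for each follower with unity IOS gain and interconnection matrix $\mathcal{M}_\ell=\mathbf{D}^{-1}\mathcal{A}_{\mathcal{F}\mathcal{F}}$, so that the closed-loop gain matrix is $\Gamma_\ell=I_m-\mathbf{D}^{-1}\mathbf{L}_1$; Lemma~\ref{lemma_Mei} (the M-matrix property of $\mathbf{L}_1$ under Assumption~\ref{assumption_graph}) gives $\rho(\Gamma_\ell)<1$, and a cyclic-small-gain theorem from the authors' earlier work closes the loop. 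Note also that even importing that lemma verbatim would not close \emph{your} gap, since it delivers only asymptotic convergence while your $e^{St}$ factor demands a geometric rate; you would either have to carry out the quantitative window-contraction argument in full, or restructure the final step along the paper's Schur-form lines.
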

        \begin{proof}
            Define the following error signals: \begin{equation*}\varepsilon_i := x_i - \Pi_i \hat{\upsilon}_i, \quad \tilde{x}_i := \hat{x}_i - x_i, \quad \tilde{\upsilon}_i := \hat{\upsilon}_i - \upsilon,\end{equation*} for $i\in\mathcal{N}$. Using \eqref{model_heterogenous}-\eqref{exosystem} and \eqref{input:control}-\eqref{eta_leaders}, and taking into account  point $iii)$ in Assumption~\ref{assum:control:observer}, the regulated error signal, in \eqref{regulated error}, can be shown to satisfy
             \begin{eqnarray}
             \label{error:all}
                e_i &=& (C_{e_i} + D_{e_i} K_i) \varepsilon_i + D_{e_i} K_{i} \tilde{x}_i - F_{e_i} \tilde{\upsilon}_i,\\
            \label{dot:epsilon}
            \dot{\varepsilon}_i &=& (A_i + B_i K_i)\varepsilon_i + B_i K_i \tilde{x}_i - E_i \tilde{\upsilon}_i - \Pi_i \eta_i,
            \end{eqnarray}
            for $i\in\mathcal{N}$, with
                \begin{eqnarray}\label{closed:observer:leaders}
              \left[ \begin{array}{c} \dot{\tilde{x}}_i \\ \dot{\tilde{\upsilon}}_i
              \end{array}\right] = \left(\bar{A}_i + \bar{L}_i\bar{C}_i \right)\left[ \begin{array}{c} \tilde{x}_i \\ \tilde{\upsilon}_i
              \end{array}\right], \quad i\in\mathcal{L},
            \end{eqnarray}
            \begin{eqnarray} \label{closed:observer:followers}
            \dot{\tilde{x}}_i &=& (A_i + L_{1i}C_i)\tilde{x}_i + (E_i + L_{1i} F_i) \tilde{\upsilon}_i,\\
            \label{dot_tilde_upsilon_followers}\dot{\tilde{\upsilon}}_i &=& S \tilde{\upsilon}_i + \eta_i,
            \end{eqnarray}
            where the two last relations hold for $i\in\mathcal{F}$, matrices
            $\bar{A}_i$, $\bar{C}_i$ are given in \eqref{bar_A}, and we used the relation $\eta_i = L_{2i}(\hat{y}_i - y_i) = L_{2i}(C_i \tilde{x}_i + F_i \tilde{\upsilon}_i)$ for $i\in\mathcal{L}$. \vspace{0.05 in}

            It is straightforward to verify that system \eqref{closed:observer:leaders} is exponentially stable with the above described choice of the gain matrices $K_i$, $L_{1i}$ and $L_{2i}$, for $i\in\mathcal{L}$. It is also easy to verify that each system  \eqref{dot:epsilon}, $i\in\mathcal{N}$, is input-to-state stable (ISS) with respect to the inputs $\eta_i$, $\tilde{\upsilon}_i$ and $\tilde{x}_i$, $i\in\mathcal{N}$. This, with \eqref{error:all} and the fact that $\tilde{\upsilon}_i(t)\to 0$ and $\tilde{x}_i(t)\to 0$ exponentially, lead to the conclusion that $e_i(t)\to 0$ for each $i\in\mathcal{L}$.

            Now, consider system \eqref{observer:reference} with \eqref{eta_followers}, for $i\in\mathcal{F}$, which, using \eqref{exosystem}, can be shown to satisfy
            \begin{equation}\label{tilde_upsilon_proof}
            \dot{\tilde{\upsilon}}_i =
            S \tilde{\upsilon}_i - \sum_{j=1}^n a_{ij} \big(\tilde{\upsilon}_i - e^{S(t-t_{k_{ij}^\mathrm{x}(t)})} \tilde{\upsilon}_j(t_{k_{ij}^\mathrm{x}(t)})\big), \quad i\in\mathcal{F}.
            \end{equation}

            Consider also the the change of coordinates $\bar{\upsilon}_i = V \tilde{\upsilon}_i $, for all $i\in\mathcal{N}$, where $V\in\mathbb{R}^{q\times q}$ is a real orthogonal matrix such that $V S V^{\top} = T$, with $T$ being the real Schur form of $S$. Note that such a canonical form exists for any real square matrix, and $T\in\mathbb{R}^{q\times q}$ is a block upper triangular matrix of the form
               \begin{equation}
               T = \begin{bmatrix}\label{T} T_{11}& T_{12}& \ldots & T_{1p} \\  & T_{22} & \ldots& T_{2p} \\ &&\ddots&\vdots\\ &&&T_{pp}\end{bmatrix},
               \end{equation}
            where $T_{\ell \ell}\in\mathbb{R}^{q_\ell \times q_{\ell}}$, $\ell = 1,\ldots p$, with $q_\ell$ can be equal to either $1$ or $2$ and $\sum_{\ell = 1}^p q_{\ell} = q$, all the elements below $T_{\ell\ell}$ are zeros, and $T_{\ell \hbar}$, for $\ell = 1,\ldots, p-1$ and $\hbar=\ell+1,\ldots, p$, are of appropriate dimensions. Accordingly, $T_{\ell \ell}$, $\ell = 1,\ldots p$, can be either a real number equal to a real eigenvalue of $S$, or a real 2-by-2 matrix having a pair of complex conjugate eigenvalues of $S$. Therefore, in view of Assumption~\ref{assum:S}, we have $T_{\ell \ell} = 0$ if $q_\ell = 1$, and the two eigenvalues of $T_{\ell\ell}$ are complex with zero real parts if $q_\ell = 2$.

            Then, using \eqref{tilde_upsilon_proof}, one can show that
            \begin{equation}\label{bar_upsilon_follower}
            \dot{\bar{\upsilon}}_i = T \bar{\upsilon}_i -\sum_{j=1}^{n} a_{ij}(\bar{\upsilon}_i - e^{T(t-t_{k_{ij}^\mathrm{x}(t)})} \bar{\upsilon}_j(t_{k_{ij}^\mathrm{x}(t)})),
            \end{equation}
            for $i\in\mathcal{F}$, where
             \begin{equation}\label{exp_T}
            e^{T\varsigma} = \begin{bmatrix}
            e^{T_{11}\varsigma}& F_{12}(\varsigma)& \ldots & F_{1p}(\varsigma) \\  & e^{T_{22}\varsigma} & \ldots& F_{2p}(\varsigma) \\ &&\ddots&\vdots\\ &&&e^{T_{pp}\varsigma}
            \end{bmatrix}
            \end{equation}
            and the functions $F_{\ell \hbar}(\varsigma)$, for $\ell=1,\ldots, p-1$ and $\hbar = \ell+1, \ldots, p$, are continuous functions that can be determined, however, their explicit expressions are not needed in the subsequent analysis.

            In view of the upper triangular form of systems \eqref{bar_upsilon_follower}, we let $\bar{\upsilon}_i^{(\ell)}\in\mathbb{R}^{q_\ell}$, $\ell = 1,\ldots, p$, denote the $\ell$-th component of $\bar{{x}}_i$ corresponding to $T_{\ell\ell}$, for $i\in\mathcal{N}$.
            Therefore, one can show from \eqref{T}-\eqref{exp_T} that
            \begin{align} \label{bar_upsilon_follower2}
             \dot{\bar{\upsilon}}_i^{(\ell)} =&~ T_{\ell\ell} \bar{\upsilon}_i^{(\ell)} - \kappa_i\bar{\upsilon}_i^{(\ell)} + \sum_{j=1}^{m} a_{ij}e^{T_{\ell\ell}(t-t_{k_{ij}^\mathrm{x}(t)})} \bar{\upsilon}_j^{(\ell)}(t_{k_{ij}^\mathrm{x}(t)})+ \phi_{i,\ell},
             \end{align}
             with
            \begin{align}\label{zeta_ell}
             \phi_{i,\ell}&=~\sum_{j=m+1}^{n} a_{ij}e^{T_{\ell\ell}(t-t_{k_{ij}^\mathrm{x}(t)})} \bar{\upsilon}_j^{(\ell)}(t_{k_{ij}^\mathrm{x}(t)})+ \sum_{\substack{\hbar = \ell + 1\\ \ell <p}}^{p} \Big(T_{\ell \hbar}\bar{\upsilon}_{i}^{(\hbar)} + \sum_{j=1}^n a_{ij} F_{\ell\hbar}(t- t_{k_{ij}^{\mathrm{x}}(t)}) \bar{\upsilon}_j^{(\hbar)}(t_{k_{ij}^{\mathrm{x}}(t)})\Big)
            \end{align}
            for $\ell = 1, \ldots, p$ and $i\in\mathcal{F}$, where $\kappa_i:=\sum_{j=1}^{n} a_{ij}$, $i\in\mathcal{F}$. Note that $\kappa_i>0$ for $i\in\mathcal{F}$ by Assumption~\ref{assumption_graph}.

            For $\ell\in\{1,\ldots, p\}$, let $\prod_\ell$ denote the system that consists of all $m$ interconnected systems in \eqref{bar_upsilon_follower2}, $i\in\mathcal{F}$, with the vector $\phi_{i,\ell}$ being considered as a perturbation term for each system. The properties of the states of each system $\prod_\ell$ are characterized in the following lemma proved in the Appendix. \vspace{0.05 in}

            \begin{lemma}\label{lemma:proof}
            Consider the above defined system $\prod_\ell$, for some $\ell\in\{1,\ldots, p\}$ and for $i\in\mathcal{F}$. Suppose that the assumptions of Theorem~\ref{theorem1} hold. Then, $\bar{\upsilon}_i^{(\ell)}$ is uniformly bounded and $\bar{\upsilon}_i^{(\ell)}(t)\to 0$, $i\in\mathcal{F}$, provided that $\phi_{i,\ell}$, $i\in\mathcal{F}$, is uniformly bounded and converges asymptotically to zero.
            \end{lemma}
                       Now, we  apply Proposition~\ref{lemma:proof} to each system $\prod_\ell$, for $\ell = p, \ldots, 1$.
            Consider system $\prod_p$ and notice from \eqref{zeta_ell} that
            $$\phi_{i,p}(t) = \sum_{j=m+1}^{n} a_{ij}e^{T_{pp}(t-t_{k_{ij}^\mathrm{x}(t)})} \bar{\upsilon}_j^{(p)}(t_{k_{ij}^\mathrm{x}(t)}).$$
            Since we have already shown that $\tilde{\upsilon}_i(t)\to 0$ exponentially for all $i\in\mathcal{L}$, we know that $\bar{\upsilon}_i(t)\to 0$ for all $i\in\mathcal{L}$. This, with the fact that $(t - t_{k_{ij}^{\mathrm{x}}(t)})\leq h^*$ by Assumption~\ref{AssumptionCommunicationBlackouts}, leads to the conclusion that $\phi_{i,p}(t)\to 0$ for $i\in\mathcal{F}$. Then, using the result of Proposition~\ref{lemma:proof}, we can show that $\bar{\upsilon}_i^{(p)}$ is uniformly bounded and $\bar{\upsilon}_i^{(p)}(t)\to 0$, $i\in\mathcal{F}$.

            For system $\prod_{(p-1)}$,  one can verify from \eqref{zeta_ell} that
            \begin{align*}
                \phi_{i,(p-1)}&= \sum_{j=m+1}^{n} a_{ij}e^{T_{(p-1)(p-1)}(t-t_{k_{ij}^\mathrm{x}(t)})} \bar{\upsilon}_j^{(p-1)}(t_{k_{ij}^\mathrm{x}(t)}) + T_{(p-1)p}\bar{\upsilon}_{i}^{(p)} + \sum_{j=1}^n a_{ij} F_{(p-1) p}(t- t_{k_{ij}^{\mathrm{x}}(t)}) \bar{\upsilon}_j^{(p)}(t_{k_{ij}^{\mathrm{x}}(t)}),
            \end{align*}
            for $i\in\mathcal{F}$. Since all the functions $F_{\ell \hbar}(\varsigma)$,  for $\ell=1,\ldots, p-1$ and $\hbar = \ell+1, \ldots, p$, are continuous, $t_{k_{ij}^{\mathrm{x}}(t)}\to +\infty$ and $(t - t_{k_{ij}^{\mathrm{x}}(t)})$ is bounded, it can be deduced that $\phi_{i,(p-1)}$ is uniformly bounded and $\phi_{i,(p-1)}(t)\to 0$, $i\in\mathcal{F}$. Then, using Proposition~\ref{lemma:proof}, one can show, following the same arguments as above (the case $\ell = p$), that $\bar{\upsilon}_i^{(p-1)}$ is uniformly bounded and $\bar{\upsilon}_i^{(p-1)}(t)\to 0$, $i\in\mathcal{F}$. Similarly, exploiting the above results and the expression of $\phi_{i,(p-2)}$ in \eqref{zeta_ell}, with $\ell = p-2$, one can show that $\phi_{i,(p-2)}$ is uniformly bounded and $\phi_{i,(p-2)}(t)\to 0$, for $i\in\mathcal{F}$. Repeating these steps for $\ell = p-2, \ldots, 1$, one can show that $\bar{\upsilon}_i$ is uniformly bounded and $\bar{\upsilon}_i(t)\to 0$, for $i\in\mathcal{F}$. Consequently, $\tilde{\upsilon}_i$ is uniformly bounded and $\tilde{\upsilon}_i(t)\to 0$, for $i\in\mathcal{F}$.

            In addition, the vector $\eta_i$ in \eqref{eta_followers} can be rewritten as
            $$\eta_i = -\sum_{j=1}^n a_{ij} \big(\tilde{\upsilon}_i - e^{S(t-t_{k_{ij}^\mathrm{x}(t)})} \tilde{\upsilon}_j(t_{k_{ij}^\mathrm{x}(t)})\big),\quad i\in\mathcal{F}.$$ Using the above results and the fact that $(t - t_{k_{ij}^{\mathrm{x}}(t)})\leq h^*$, one can verify that $\eta_i$ is uniformly bounded and $\eta_i(t)\to 0$, for all $i\in\mathcal{F}$. 

            Finally, it can be verified that system \eqref{closed:observer:followers} is ISS with respect to the input $\tilde{\upsilon}_i$, $i\in\mathcal{F}$. Therefore, $\tilde{x}_i$ is uniformly bounded and $\tilde{x}_i(t)\to 0$ for all $i\in\mathcal{F}$. This, with the ISS property of system \eqref{dot:epsilon}, lead to the conclusion that $\varepsilon_i$, $e_i$ are uniformly bounded and $\varepsilon_i(t)\to 0$, $e_i(t)\to 0$, $i\in\mathcal{F}$. The proof is complete.
\end{proof}

              \begin{remark} Note that the result in Theorem~\ref{theorem1} holds provided that  the instants of time at which two successive information received by agent $i$, for each $(j,i)\in \mathcal{E}$, are not spaced by more than an unknown bound $h^*$ that can take arbitrary large values. Also, the proposed approach in this section can be adapted, with obvious modifications, to solve other problems considered in the available literature of heterogeneous multi-agent systems ({\it e.g.}, \cite{su2012cooperative1, xiang2014synchronised, liu2015robust,  haghshenas2015containment}) under the same communication constraints. For instance, in the special case where each agent can measure its state vector and the leader agents have direct access to the exogenous signal; $y_i = \mbox{col}\{x_i, \upsilon\}$ for $i\in\mathcal{L}$ and $y_i = x_i$ for $i\in\mathcal{F}$, the output regulation problem is equivalent to the one studied in \cite{su2012cooperative1} in the case of delay-free communication between agents. In this case, one can consider the control  $u_i = K_i (x_i - \Pi_i \hat{\upsilon}_i) + \Gamma_i \hat{\upsilon}_i$, with $\hat{\upsilon}_i=\upsilon$ for $i\in\mathcal{L}$ and $\hat{\upsilon}_i$, for $i\in\mathcal{F}$, is given by \eqref{observer:reference} with \eqref{eta_followers}. Following similar steps in the proof of our main results below, it can be shown that the cooperative output regulation problem in this case is solved under similar assumptions in Theorem~\ref{theorem1}.
             \end{remark}

\section{A Numerical Example}
We consider a multi-agent system consisting of six agents modeled as in \eqref{model_heterogenous}-\eqref{exosystem} with
        \begin{eqnarray}
        A_i = \begin{bmatrix} 0&1\\ a_i&a_i\end{bmatrix},&& B_i=\begin{bmatrix} 0\\1\end{bmatrix},C_i=\begin{bmatrix}1&0\end{bmatrix},~D_i = 0, \nonumber\\
        E_i = \begin{bmatrix}0&0\\0&b_i\end{bmatrix}, && F_i = \begin{bmatrix}c_i&0\end{bmatrix},~ S=\begin{bmatrix}0&1\\-1&0\end{bmatrix},\nonumber
        \end{eqnarray}
        where $(a_i,b_i)=(-2,0)$, $i=1,2$, $(a_i,b_i)=(0,1)$, $i=3,4,5,6$, $c_i=-1$, $i=5,6$. Accordingly, we define $\mathcal{F}:=\{1,\ldots,4\}$ and $\mathcal{L}:=\{5,6\}$. The state of each agent is denoted by $x_i := (x_{1_i}, x_{2_i})^\top\in\mathbb{R}^2$, $i\in\mathcal{N}$, where $x_{1_i}$ and $x_{2_i}$ are, respectively, the position and velocity of the agent. The state of the exogenous system \eqref{exosystem} is denoted by $\upsilon = (r, w)^\top$. The regulated error in \eqref{regulated error} is selected as: $C_{e_i}=C_i$, $D_{e_i}=0$, $F_{e_i}=[-1~0]$, \textit{i.e.,} it is required that the position of each agent converges to the reference position defined by $r$; $e_i = x_{1_i}-r$. We can verify that Assumption~\ref{assum:control:observer} is satisfied, in particular, equations \eqref{regulator:equations} admit a solution given by: $\Pi_i = I_2$ and $\Gamma_i = \left[\begin{array}{cc}-(1+a_i)&-(a_i+b_i)\end{array}\right]$, $i\in\mathcal{N}$.

        The Laplacian matrix associated to $\mathcal{G}$ is written as in \eqref{laplacian} with
        \begin{eqnarray}
        \renewcommand*{\arraystretch}{0.75}
        \mbf{L}_1 = \begin{bmatrix} 2& 0&0 &-1\\
        -1& 1& 0& 0\\-1& 0& 2& -1\\0& -1& 0& 2\end{bmatrix},\quad \mbf{L}_2=\begin{bmatrix}-1&0\\0 &0 \\0 &0\\0& -1\end{bmatrix},\nonumber
        \end{eqnarray}
        and the communication process between agents is described in Section~\ref{sec:model}, with $h^* = 1.5~\sec$.

        Now, the control gains are selected as described in Theorem~\ref{theorem1}: $K_i = \Big[-10~~-8\Big]$, for $i\in\mathcal{N}$, $L_{1_i}=\Big[-15~~-25\Big]^\top$, $L_{2_i}=\Big[-10~~-10\Big]^\top$ for $i\in\mathcal{L}$, $L_{1_i}=\Big[-10~~-10\Big]^\top$ for $i\in\mathcal{F}$. The obtained results when applying the control law in Theorem~\ref{theorem1} are shown in Figs.~\ref{fig:error}-\ref{fig:positions}. It can be seen from Fig.~\ref{fig:error} that the regulated error signals of all agents converge to zero. This can also be seen from Fig.~\ref{fig:positions} which shows that the position-like states of all agents converge to the reference signal $r(t)$ despite the presence of large communication blackout intervals between each pair of communicating agents.

\begin{figure}[t]
\centering
\begin{minipage}[h]{0.49\columnwidth}
\centering
\includegraphics[width=1.1\columnwidth]{error}
\caption{\footnotesize The regulated errors of the agents.} \label{fig:error}
\end{minipage}
 \hfill
\begin{minipage}[h]{0.49\columnwidth}
\centering
\includegraphics[width=1.1\columnwidth]{positions}
\caption{\footnotesize Position-like states of all agents.} \label{fig:positions}
\end{minipage}
\end{figure}

          %
%

\section{Conclusion}
        We considered the cooperative output regulation problem of high-order heterogeneous multi-agent systems with constrained discrete-time information exchange. The problem has been solved under mild assumptions on the directed interconnection graph topology, with intermittent and asynchronous discrete-time information exchange in the presence of unknown time-varying communication delays and communication blackouts.

\appendix

\section{Proof of Lemma~\ref{lemma:proof}}\label{appen:proof:lemma}

            Consider system $\prod_\ell$, given in \eqref{bar_upsilon_follower2} for some $\ell\in\{1,\ldots, p\}$ and $i\in\mathcal{F}$;
            \begin{eqnarray*}
             \dot{\bar{\upsilon}}_i^{(\ell)} &=& T_{\ell\ell} \bar{\upsilon}_i^{(\ell)} - \kappa_i\bar{\upsilon}_i^{(\ell)} + \sum_{j=1}^{m} a_{ij}e^{T_{\ell\ell}(t-t_{k_{ij}^\mathrm{x}(t)})} \bar{\upsilon}_j^{(\ell)}(t_{k_{ij}^\mathrm{x}(t)})+ \zeta_{i,\ell}.
             \end{eqnarray*}

              In view of Assumption~\ref{assum:S} and \eqref{T}, it can be deduced that $T_{\ell\ell}=0$ if $q_{\ell}=1$, and the two eigenvalues of $T_{\ell\ell}$ are complex conjugates with zero real parts if $q_{\ell}=2$, for each $\ell= 1,\ldots, p$. Consider the change of variables $z_i^{(\ell)}(t) = e^{-T_{\ell\ell}(t-t_0)}\bar{\upsilon}_i^{(\ell)}(t)$ and $\xi_{i,\ell}(t) = e^{-T_{\ell\ell}(t-t_0)}\zeta_{i,\ell}(t)$, for all $t\geq t_0$ and $i\in\mathcal{F}$. Using the relation $\dot{z}_i^{(\ell)}= e^{-T_{\ell\ell}(t-t_0)}\dot{\bar{\upsilon}}_i^{(\ell)} - T_{\ell\ell}e^{-T_{\ell\ell}(t-t_0)}{\bar{\upsilon}}_i^{(\ell)}$, we can verify that
           \begin{eqnarray}\label{dot_z}
           \dot{z}_i^{(\ell)}
           &=& e^{-T_{\ell\ell}(t-t_0)}\Big( - \kappa_i\bar{\upsilon}_i^{(\ell)} + \sum_{j=1}^{m} a_{ij}e^{T_{\ell\ell}(t-t_{k_{ij}^\mathrm{x}(t)})} \bar{\upsilon}_j^{(\ell)}(t_{k_{ij}^\mathrm{x}(t)})\Big) \nonumber\\
           &&+ e^{-T_{\ell\ell}(t-t_0)} \big(T_{\ell\ell} \bar{\upsilon}_i^{(\ell)} + \zeta_{i,\ell}\big) - T_{\ell\ell}e^{-T_{\ell\ell}(t-t_0)}{\bar{\upsilon}}_i^{(\ell)}   \nonumber\\
           &=& - \kappa_iz_i^{(\ell)} + \sum_{j=1}^{m} a_{ij}z_j^{(\ell)}(t_{k_{ij}^{\mathrm{x}}(t)}) + \xi_{i,\ell},
           \end{eqnarray}
           for $i\in\mathcal{F}$. Therefore, the following estimate
           \begin{equation}\label{estimate_state}
           |z_i^{(\ell)}(t)| \leq e^{-\kappa_i(t-t_0)}z_i^{(\ell)}(t_0) + \sup_{\varsigma\in[t_0, t]}\big|\nu_{i,\ell}(\varsigma)\big|,
           \end{equation}
           holds for all $t\geq t_0$ and $i\in\mathcal{F}$, where $|\cdot|$ denotes the Euclidean norm of a vector and
           \begin{equation}\label{input_for_estimate}
           \nu_{i,\ell}(t) :=  \sum_{j=1}^{m} \frac{a_{ij}}{\kappa_i}z_j^{(\ell)}(t_{k_{ij}^{\mathrm{x}}(t)}) + \frac{1}{\kappa_i}\xi_{i,\ell}(t).
           \end{equation}
           Denote $|\bs{z}_\ell(s)|:= \big(|z_1^{(\ell)}(s)|, \ldots, |z_m^{(\ell)}(s)|\big)^\top$, $|\bs{\nu}_{\ell}(s)|:= \big(|\nu_{1,\ell}(s)|, \ldots, |\nu_{m,\ell}(s)|\big)^\top$, and let $\beta(|\bs{z}_{\ell}(t_0)|, t-t_0)$ be the vector stack of all the first exponentially decaying terms in the right-hand-side of \eqref{estimate_state} for $i\in\mathcal{F}$. Then, inequality \eqref{estimate_state} can be rewritten in the form
          \begin{eqnarray}
        |\bs{z}_\ell(t)| \leq \beta(|\bs{z}_{\ell}(t_0)|, t-t_0) + \Gamma_\ell^0 \cdot \sup_{\varsigma\in[t_0, t]}\big|\bs{\nu}_{\ell}(\varsigma)\big|,
        \end{eqnarray}
        for all $t\geq t_0$, where the supremum of a vector argument is understood in the element-wise sense. Consequently, one can conclude that the overall system that consists of all systems \eqref{dot_z}, for $i\in\mathcal{F}$, with the $m$ states/outputs $z_i^{(\ell)}$ and $m$ inputs $\nu_{i,\ell}$, $i\in\mathcal{F}$, is ISS/IOS and the IOS gain matrix $\Gamma^0_{\ell}$ is equal to $I_m$. Also, one can verify from \eqref{input_for_estimate} that
           \begin{equation}
            |\bs{\nu}_{\ell}(t)| \leq \mathcal{M}_\ell \cdot \sup_{\varsigma\in[t_{k_{ij}^{\mathrm{x}}(t)}, t]}\big|\bs{z}_\ell(\varsigma)\big| + \mathbf{D}^{-1}|\Xi_{\ell}(t)|,
           \end{equation}
        holds for all $t \geq 0$, where $|\Xi_{\ell}(t)|:= \big(|\xi_{1,\ell}(t)|, \ldots, |\xi_{m,\ell}(t)|\big)^\top$, $\mbf{D}:=\mbox{diag}\{d_{ij}\}$, with $d_{ii} = \kappa_i$, $i\in\mathcal{F}$, and the elements of the interconnection matrix $\mathcal{M}_\ell :=\{ \mu_{ij,\ell} \}\in\mathbb{R}^{m\times m}$ are obtained as $\mu_{ij,\ell} = \frac{a_{ij}}{\kappa_i}$, for $i, j\in\mathcal{F}$. It can be deduced from the assumptions of the proposition that $|\Xi_{\ell}(t)|$ is uniformly bounded and $|\Xi_{\ell}(t)|\to 0$. This, with the fact that $(t - t_{k_{ij}^{\mathrm{x}}(t)})$ is bounded by Assumption~\ref{AssumptionCommunicationBlackouts}, lead to the conclusion that $z_i^{(\ell)}$ and $\nu_{i,\ell}$, $i\in\mathcal{F}$, are uniformly bounded and converge asymptotically to zero under the condition that $\rho(\Gamma_\ell) < 1$, where $\rho(\Gamma_\ell)$ denotes the spectral radius of the closed loop gain matrix $\Gamma_\ell$ defined as $\Gamma_\ell:=\Gamma^0_\ell~ \mathcal{M}_\ell$ (see Theorem 2 in \cite{abdessameud2015synchronization}).

        In view of \eqref{laplacian}, matrix $\Gamma_\ell$ can be rewritten as
           $\Gamma_\ell = I_m - \mbf{D}^{-1} \mbf{L}_1$,
        where matrix $\mathbf{L}_1$ in \eqref{laplacian} is a nonsingular M-matrix by Lemma~\ref{lemma_Mei}, and so is $\mbf{D}^{-1} \mbf{L}_1 = I_m - \Gamma_{\ell}$. Consequently, one can conclude, from \cite[page 167]{qu2009cooperative}, that $\rho(\Gamma_\ell)<1$. As a result, $z_{i}^{(\ell)}$, $\nu_{i,\ell}$ are uniformly bounded and $z_{i}^{(\ell)}(t)\to 0$, $\nu_{i,\ell}(t)\to 0$ for $i\in\mathcal{F}$. The results of the proposition then follow using the definition of $z_{i}^{(\ell)}$ and  Assumption~\ref{assum:S}.

%
%

\end{document}